%
%
%
%
%
%
%
\documentclass[%
 aip,
 amsmath,amssymb,
 reprint,%
]{revtex4-1}

\usepackage{graphicx}
\usepackage{dcolumn}
\usepackage{bm}

\usepackage[utf8]{inputenc}
\usepackage[T1]{fontenc}
\usepackage{mathptmx}
\usepackage{physics}
\usepackage{amsthm}
\usepackage{amsfonts}
\usepackage{hyperref} 
\usepackage{cleveref} 
\usepackage{color}


\newtheorem{theorem}{Theorem}[section]
\newenvironment{thm}{\begin{theorem}}{\hfill$\diamond$\end{theorem}}
\crefname{thm}{theorem}{theorems}

\newtheorem{corollary}{Corollary}[theorem]
\newenvironment{cor}{\begin{corollary}}{\hfill$\diamond$\end{corollary}}
\crefname{cor}{corollary}{corollaries}

\newtheorem{lemma}{Lemma}[section]
\newenvironment{lem}{\begin{lemma}}{\hfill$\diamond$\end{lemma}}
\crefname{lem}{lemma}{lemmas}

\newtheorem{definition}{Definition}[section]
\newenvironment{mydef}{\begin{definition}}{\hfill$\diamond$\end{definition}}
\crefname{mydef}{definition}{definitions}

\newtheorem{problem}{Problem}[section]

\crefname{prm}{problem}{problems}

\makeatletter
\DeclareRobustCommand{\bigDelta}{\mathop{\vphantom{\sum}\mathpalette\bigDelta@\relax}\slimits@}
\newcommand{\bigDelta@}[2]{\vcenter{\sbox\z@{$#1\sum$}\hbox{\resizebox{.9\dimexpr\ht\z@+\dp\z@}{!}{$\m@th\Delta$}}}}
\makeatother

\begin{document}

\preprint{AIP/123-QED}

\newcommand{\axel}[1]{{\color{green}Axel: #1}} 
\newcommand{\sharpP}{\#$\mathbb{P}$-Complete}
\newcommand{\bs}[1]{{\bf #1}}

\title[Counting single-qubit Clifford equivalent graph states is \#$\mathbb{P}$-Complete]{Counting single-qubit Clifford equivalent graph states is \#$\mathbb{P}$-Complete}
\author{Axel Dahlberg}
\author{Jonas Helsen}
\author{Stephanie Wehner}
\affiliation{QuTech, Delft University of Technology, and Kavli Institute of Nanoscience\\Delft, The Netherlands}

\date{\today}

\begin{abstract}
    Graph states, which include for example Bell states, GHZ states and cluster states, form a well-known class of quantum states with applications ranging from quantum networks to error-correction.
    Deciding whether two graph states are equivalent up to single-qubit Clifford operations is known to be decidable in polynomial time and have been studied both in the context of producing certain required states in a quantum network but also in relation to stabilizer codes.
    The reason for the latter this is that single-qubit Clifford equivalent graph states exactly corresponds to \emph{equivalent} stabilizer codes.
    We here consider the computational complexity of, given a graph state $\ket{G}$, counting the number of graph states, single-qubit Clifford equivalent to $\ket{G}$.
    We show that this problem is \sharpP.
    To prove our main result we make use of the notion of isotropic systems in graph theory.
    We review the definition of isotropic systems and point out their strong relation to graph states.
    We believe that these isotropic systems can be useful beyond the results presented in this paper. 

\end{abstract}

\maketitle

\section{Introduction}
Graph states form a well-studied class of quantum states and feature in many applications in quantum networks and quantum computers.
In a quantum network, graph states are a resource used by applications such as secret sharing~\cite{Markham2008secretsharing}, anonymous transfer~\cite{Christandl2005anonymous} and others.
In a quantum computer, graph states are the logical codewords of many quantum error-correcting codes~\cite{Gottesman2004thesis} and form a universal resource for measurement-based quantum computing~\cite{Briegel2001measurement}.
The action of single-qubit Clifford operations on graph states is well-understood and can be characterized completely in terms of operations called \emph{local complementations}, acting on the corresponding graph~\cite{Nest2003LCLC}. When faced with a class of objects and an action on them it is natural to consider the orbits induced by this action. The orbit of graph states under single-qubit Clifford operations can be studied by considering the orbits of simple graphs under local complementations, through the mapping mentioned above. 
The orbits of graph states have been studied in for example~\cite{Danielsen2006classification}. There the motivation came from quantum error correction: graph states can be mapped to stabilizer codes and moreover, the number of orbits for a given number of qubits is equal to the number of \emph{equivalent} stabilizer codes. This gives a method to count the number of inequivalent stabilizer codes on a fixed number of qubits.
In~\cite{Danielsen2006classification} the number of inequivalent stabilizer codes is computed for up to 12 qubits by counting the number of orbits of graphs under local complementations.

Furthermore, when studying entanglement measures and the equivalence of quantum states under local operations, the orbits of graphs states under single-qubit Cliffords is naturally an important question.
In the excellent survey on graph states ~\cite{Hein2006} it is stated that the computational complexity of generating the orbit of a given graph states is unknown.
Here we show that given a graph $G$, counting the number of graph states equivalent to $\ket{G}$ under single-qubit Clifford operations, i.e. deciding the size of the orbit, is \sharpP.
\sharpP\ problems are of great interest in the field of quantum computing.
The reason being that the problem of boson-sampling~\cite{Aaronson2013boson}, efficiently solvable using a quantum computer, has very strong similarities with the \sharpP\ problem of computing the permanent of a matrix~\cite{Valiant1979permanent}.

\section*{Related work}
The action of single-qubit Clifford operations on graphs states was characterized by Van den Nest et al. in~\cite{VandenNest2004graphical}, where it was shown that these operations acting on a graph state can be completely described by the action of local complementations on the corresponding graph.
Furthermore, in~\cite{VandenNest2004efficient}, Van den Nest et al. used this fact to extend the efficient algorithm by Bouchet for deciding equivalence of graphs under local complementations~\cite{Bouchet1991efficient} to an efficient algorithm for deciding equivalence of graph states under single-qubit Clifford operations.

If one also allow for single-qubit Pauli measurements and classical communication, the problem turns out to be equivalent~\cite{dahlberg2018courcelle} to the known graph theory problem of deciding if a graph is a vertex-minor~\cite{Courcelle2007vertexminor,Oum2005rankwidth} of another.
We have previously used this fact to show that deciding if a graph state $\ket{H}$ can be reached from another $\ket{G}$ using only single-qubit Clifford operations, single-qubit Pauli measurements and classical communication ($\mathrm{LC}+\mathrm{LPM}+\mathrm{CC}$) is NP-Complete, even if $\ket{H}$ is restricted to be (1) a GHZ-state on a fixed subset of the qubits of $\ket{G}$~\cite{dahlberg2018long}, (2) a GHZ-state on some subset of the qubits of $\ket{G}$~\cite{dahlberg2019iso} and (3) the tensor product of Bell pairs between fixed qubits~\cite{dahlberg2019bellvm}.

However, even if a problem is $\mathbb{NP}$-Complete, one can often find efficient algorithms for certain restrictions of the problem.
A general concept is that of fixed-parameter tractability, where an algorithm solving a hard problem is shown to have a runtime $\mathcal{O}(f(r)\cdot\text{poly}(n))$, where $f$ is some computable function, $r$ is some parameter of the input and $n$ is the size of the input.
For $\mathbb{NP}$-Complete problems, $f(r)$ is necessarily super-polynomial in $n$, unless $\mathbb{P}=\mathbb{NP}$.
Nonetheless, a fixed-parameter tractable problem can therefore be solved in polynomial time on inputs where the parameter $r$ is bounded.
An extremely powerful result in this context is that of Courcelle~\cite{Courcelle2011book}, which states that any graph problem, expressible in a certain rich logic (MS)\footnote{Monadic second-order logic}, can be solved in time $\mathcal{O}(f(\mathrm{rwd}(G))\cdot\abs{V(G)}^3)$, where $\mathrm{rwd}(G)$ is the rank-width~\cite{Oum2005rankwidth} of $G$ and $\abs{V(G)}$ is the number of vertices of $G$.
In~\cite{Courcelle2007vertexminor} Courcelle and Oum showed that the vertex-minor problem is expressible in MS and therefore that it is fixed-parameter tractability in the rank-width of the input graph.
It turns out that the rank-width of a graph $G$ equals one plus the \emph{Schmidt-rank width} the graph state $\ket{G}$~\cite{VandenNest2007schmidt}.
Using these results, we applied Courcelle's theorem to the problem of transforming graph states under $\mathrm{LC}+\mathrm{LPM}+\mathrm{CC}$ in~\cite{dahlberg2018courcelle} and thus showed that this problem is fixed-parameter tractable in the Schmidt-rank width of the input graph state.

In this paper we will focus on the computational complexity of counting the number of graph states equivalent to some graph state using only single-qubit Clifford operations.
We point out that, since the property of whether a graph is locally equivalent to another is also expressible in MS~\cite{Courcelle2007vertexminor,dahlberg2018courcelle}, Courcelle's machinery can also be applied to this problem.
In fact, Courcelle's theorem also holds for counting the number of satisfying solutions~\cite{Courcelle2011book}, which is what we are interested in here.
The details for how to apply Courcelle's theorem to the problem at hand, we leave for another paper.
Here, we instead show that the problem is \sharpP, and thus has no efficient algorithm in the general case, unless $\mathbb{P}=\mathbb{NP}$.

\subsection*{Overview}
In \cref{sec:graph_states} we recall how graph states and single-qubit Cliffords relate to graphs and local complementations.
In \cref{sec:isotropic} we review the graph theoretical notion of an isotropic system and relate this to stabilizer and graph states.
In \cref{sec:complexity} we review the complexity class \sharpP.
In \cref{sec:counting} we prove our main result that counting the number of graph states equivalent under single-qubit Cliffords is \sharpP.

\subsection*{Notation}
We use the following notation for sets of consecutive natural numbers.
\begin{equation}
	[n] = \{i\in\mathbb{Z}\;:\;0\leq i < n\}
\end{equation}
For a vertex $u$ in a graph $G=(V,E)$ we will denote the \emph{neighborhood}, i.e. the adjacent vertices as
\begin{equation}
    N_G(v)=\{u\in V:(u,v)\in E\}.
\end{equation}
Furthermore given a subset $X\subseteq V$ we use the following notation for the symmetric difference of the neighborhoods of the vertices in $X$
\begin{equation}
	N_G(X)=\bigDelta_{v\in X}N_G(v).
\end{equation}
Given a graph $G$, we denote by $\overline{G}$ the \emph{complementary} graph, i.e. the graph with vertex-set $V$ and edge-set
\begin{equation}
    \overline{E}=\{(u,v)\in V\times V: u\neq v \;\land\; (u, v)\notin E\}.
\end{equation}

The Pauli matrices are denoted as
\begin{align}
	&I=\begin{pmatrix}1 & 0\\0 & 1\end{pmatrix},\quad &X=\begin{pmatrix}0 & 1\\1 & 0\end{pmatrix}, \nonumber\\ &Y=\begin{pmatrix}0 & -\mathrm{i}\\\mathrm{i} & 0\end{pmatrix},\quad &Z=\begin{pmatrix}1 & 0\\0 & -1\end{pmatrix}.
\end{align}
The single-qubit Pauli group $\mathcal{P}_1$ consists of $\{\mathrm{i}^kI,\mathrm{i}^kX,\mathrm{i}^kY,\mathrm{i}^kZ\}$ for $k\in \mathbb{Z}_4$ together with matrix-multiplication.
Single-qubit unitaries that take elements of $\mathcal{P}_1$ to elements of $\mathcal{P}_1$ are called single-qubit Clifford operations and formally form the normalizer of $\mathcal{P}_1$.
The $n$-qubit Pauli group $\mathcal{P}_n$ is the $n$-fold single-qubit Pauli group whose elements are the tensor-products of elements of $\mathcal{P}_1$.

\section{Graph states}\label{sec:graph_states}
Here we review graph states and their properties under single-qubit Clifford operations.
We start with reviewing stabilizer states: a subset of graph states.

\subsection{Stabilizer states}
A stabilizer state $\ket{\mathcal{S}}$ on $n$ qubits is defined by its stabilizer group $\mathcal{S}$, which is a subgroup of the Pauli group $\mathcal{P}_n$~\cite{Gottesman2004thesis}.
The stabilizer state is defined to be a state such that it is an eigenstate of all elements of $\mathcal{S}$ with an eigenvalue of $+1$, i.e. $s\ket{\mathcal{S}}=\ket{\mathcal{S}}$ for $s\in \mathcal{S}$.
To avoid $\ket{\mathcal{S}}$ being a trivial zero state there are two requirements of $\mathcal{S}$, (1) $-I\notin \mathcal{S}$ and (2) all elements of $\mathcal{S}$ should commute\footnote{Elements of the Pauli group either commute or anti-commute.}.
Furthermore, for $\ket{\mathcal{S}}$ to be a unique state (up to a global phase), $\mathcal{S}$ needs to be of size $2^n$ and can therefore be described by $n$ independent generators.
As an example consider the stabilizer group $\mathcal{S}_0$ generated by $X\otimes X$ and $Z\otimes Z$.
One can check that $\mathcal{S}_0$ describes the state
\begin{equation}
	\ket{\mathcal{S}_0} = \frac{1}{\sqrt{2}}\left(\ket{0}\otimes\ket{0} + \ket{1}\otimes\ket{1}\right)
\end{equation}

\subsection{Graph states}
A graph state is a multi-partite quantum state $\ket{G}$ which is described by a graph $G$, where the vertices of $G$ correspond to the qubits of $\ket{G}$~\cite{Hein2006}.
The graph state is formed by initializing each qubit $v\in V(G)$ in the state $\ket{+}_v=\frac{1}{\sqrt{2}}(\ket{0}_v+\ket{1}_v)$ and for each edge $(u,v)\in E(G)$ applying a controlled phase gate between qubits $u$ and $v$.
Importantly, all the controlled phase gates commute and are invariant under changing the control- and target-qubits of the gate.
This allows the edges describing these gates to be unordered and undirected.
Formally, a graph state $\ket{G}$ is given as
\begin{equation}
    \ket{G}=\prod_{(u,v)\in E(G)}C_Z^{(u,v)}\left(\bigotimes_{v\in V(G)}\ket{+}_v\right),
\end{equation}
where $C_Z^{(u,v)}$ is a controlled phase gate between qubit $u$ and $v$, i.e.
\begin{equation}
    C_Z^{(u,v)}=\ket{0}\bra{0}_u\otimes\mathbb{I}_v+\ket{1}\bra{1}_u\otimes Z_v
\end{equation}
and $Z_v$ is the Pauli-$Z$ matrix acting on qubit $v$.
As an example, the graph state described by the complete graph on two vertices $K_2$ is single-qubit Clifford equivalent to each of the four Bell pairs since
\begin{equation}
    \ket{K_2} = \frac{1}{\sqrt{2}}(\ket{0}_a\otimes\ket{+}_b+\ket{1}_a\otimes\ket{-}_b) = H_b\ket{\Phi^+}_{ab}
\end{equation}
where $\ket{+}=(\ket{0}+\ket{1})/\sqrt{2}$, $H_b$ is a Hadamard gate on qubit $b$ and where
\begin{equation} 
    \ket{\Phi^+}_{ab} = \frac{1}{\sqrt{2}}(\ket{0}_a\otimes\ket{0}_b+\ket{1}_a\otimes\ket{1}_b).
\end{equation}

A graph state is also a stabilizer state~\cite{Hein2006} with a stabilizer group generated by
\begin{equation}\label{eq:graph_stab}
	g_v = X_v\prod_{u\in N_v} Z_u\quad\text{for}\quad v\in V(G)
\end{equation}
Furthermore, any stabilizer state is single-qubit Clifford equivalent to some graph state~\cite{VandenNest2004graphical}.

Importantly here is that the above in fact gives a bijective mapping from graphs to graph states.
Formally we have the following theorem.
\begin{lem}\label{lem:bijective}
    Two graphs states $\ket{G}$ and $\ket{G'}$ are equal if and only if their corresponding graphs $G$ and $G'$ are equal.
\end{lem}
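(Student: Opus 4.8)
The ``if'' direction is immediate: if $G = G'$ then $\ket{G}$ and $\ket{G'}$ are built by the very same circuit, hence equal. For the ``only if'' direction the plan is to show that the graph $G$ can be read off unambiguously from the state $\ket{G}$, by inspecting its amplitudes in the computational basis.

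First I would record the standard closed form for these amplitudes. Since each $C_Z^{(u,v)}$ acts on a computational basis state $\ket{x}$ (with $x \in \{0,1\}^{V(G)}$) by multiplying its amplitude by $(-1)^{x_u x_v}$, and since these gates commute, a short induction on $\abs{E(G)}$ gives
\[
    \ket{G} \;=\; \frac{1}{\sqrt{2^{\,n}}} \sum_{x \in \{0,1\}^{V(G)}} (-1)^{q_G(x)} \ket{x}, \qquad q_G(x) = \sum_{(u,v) \in E(G)} x_u x_v \bmod 2 ,
\]
where $n = \abs{V(G)}$. Now suppose $\ket{G} = \ket{G'}$. Then both are states on the same number of qubits, so $\abs{V(G)} = \abs{V(G')} = n$ and we identify $V(G) = V(G') = [n]$; moreover both states have amplitude $2^{-n/2}$ on $\ket{0\cdots 0}$, so equality of the states forces $(-1)^{q_G(x)} = (-1)^{q_{G'}(x)}$, i.e.\ $q_G(x) = q_{G'}(x)$ for every $x \in \{0,1\}^n$.

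It remains to recover $E(G)$ from the function $q_G$. For any two distinct vertices $u, v$, evaluate $q_G$ at the vector $x$ that is $1$ in coordinates $u$ and $v$ and $0$ elsewhere: a monomial $x_a x_b$ with $(a,b) \in E(G)$ then survives only if $\{a,b\} \subseteq \{u,v\}$, and since $a \neq b$ this happens only for the edge $(u,v)$ itself. Hence $q_G$ evaluated at this $x$ equals $1$ if $(u,v) \in E(G)$ and $0$ otherwise. Applying the same to $G'$ and using $q_G = q_{G'}$ yields $E(G) = E(G')$, and combined with $V(G) = V(G')$ we conclude $G = G'$. I do not expect a genuine obstacle here; the only points requiring care are justifying the amplitude formula and keeping the arithmetic in $\mathbb{F}_2$, so that only the pairwise interactions (and not, say, a spurious factor of $2$ from ordered pairs) enter the evaluation argument. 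An alternative, staying in the stabilizer picture: equal states have equal stabilizer groups, and in the symplectic representation the generator $g_v$ of \cref{eq:graph_stab} is the unique stabilizer element whose $X$-part is the indicator vector of $\{v\}$; its $Z$-part is then the indicator of $N_G(v)$, which is therefore determined by the state for every $v$.
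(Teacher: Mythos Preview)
Your argument is correct, and so is the alternative stabilizer-picture sketch at the end. The paper proceeds differently: it computes the inner product $\braket{G}{G'}$, uses $(C_Z^{(u,v)})^2=\mathbb{I}$ and commutativity to collapse the product of $C_Z$'s to those on $E(G)\Delta E(G')$, obtaining $\braket{G}{G'}=\bigl(\bigotimes_v\bra{+}_v\bigr)\ket{G+G'}$, and then argues via a Schmidt-rank comparison across a cut separating two adjacent vertices of $G+G'$ that this equals $1$ iff $G+G'$ has no edges. Your route instead writes down the amplitude formula $\ket{G}=2^{-n/2}\sum_x(-1)^{q_G(x)}\ket{x}$, infers $q_G=q_{G'}$ from equality of states (the common amplitude at $\ket{0\cdots0}$ neatly disposes of any global phase), and then reads off each edge by evaluating the quadratic form at weight-two vectors. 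This is more constructive---you exhibit an explicit inverse to $G\mapsto\ket{G}$---and it generalizes (e.g.\ to hypergraph states) without change; the paper's inner-product trick is slicker algebraically but ends with a somewhat indirect Schmidt-rank step. One small wording issue: you write ``we identify $V(G)=V(G')=[n]$'', but in this setup qubits are already labeled by vertices, so equal states live on the same labeled Hilbert space and $V(G)=V(G')$ is forced rather than chosen---the paper makes this explicit by first disposing of the unequal-vertex-set case.
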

\begin{proof}
    Let $\ket{G}$ and $\ket{G'}$ be two graph states.
    If $G$ and $G'$ have differing vertex-sets then clearly $\ket{G}$ and $\ket{G'}$ are different since they are states on different sets of qubits.
    Assume now that $G$ and $G'$ are graphs with the same vertex-set $V$.
    The inner product between $\ket{G}$ and $\ket{G'}$ will then be given as\footnote{We assume here that when iterating over a set, the order of the elements is always the same.}
    \begin{equation}
        \braket{G}{G'} = \left(\bigotimes_{v\in V}\bra{+}_v\right)\prod_{(u,v)\in E(G)}C_Z^{(u,v)}\prod_{(u,v)\in E(G')}C_Z^{(u,v)}\left(\bigotimes_{v\in V}\ket{+}_v\right).
    \end{equation}
    Using the fact that $C_Z^{(u,v)}$ commute and square to identity for any $(u,v)$ we find that the above equation evaluates to
    \begin{equation}
        \braket{G}{G'} = \left(\bigotimes_{v\in V}\bra{+}_v\right)\ket{G + G'}
    \end{equation}
    where $G + G'$ is the graph with vertex-set $V$ and edge-set $E(G)\Delta E(G')$ with $\Delta$ being the symmetric difference.
    The state $\ket{G + G'}$ is equal to $\bigotimes_{v\in V}\bra{+}_v$ if and only if $G + G'$ is the empty graph.
    One can see this by for example considering the Schmidt-rank for a bipartition which separates some adjacent vertices in $G + G'$, since this would be one for $\bigotimes_{v\in V}\ket{+}_v$ and greater than one for $\ket{G + G'}$.
    We therefore have that $\braket{G}{G'}$ is one if and only if $G$ and $G'$ are equal.
\end{proof}

It turns out that single-qubit Clifford operations on graph states can be completely captured by an operation called \emph{local complementation} on the corresponding graphs.

\begin{mydef}[Local complementation]\label{def:LC}
	A local complementation $\tau_v$ is a graph operation specified by a vertex $v$, taking a graph $G$ to $\tau_v(G)$ by replacing the induced subgraph on the neighborhood of $v$, i.e.  $G[N(v)]$, by its complement.
    The neighborhood of any vertex $u$ in the graph $\tau_v(G)$ is therefore given by
    \begin{equation}
	    N(u)^{(\tau_v(G))}=\begin{cases}N(u)\Delta (N(v)\setminus\{u\}) & \quad \text{if } (u,v)\in E(G) \\ N(u) & \quad \text{else}\end{cases},
    \end{equation}
    where $\Delta$ denotes the symmetric difference between two sets.
\end{mydef}

The action of a local complementation on a graph induces the following sequence of single-qubit Clifford operations on the corresponding graph state
\begin{equation}\label{eq:LCU}
	U_v^{(G)}=\exp\left(-\mathrm{i}\frac{\pi}{4}X_v\right)\prod_{u\in N_v}\exp\left(\mathrm{i}\frac{\pi}{4}Z_u\right),
\end{equation}
where $X_v$ and $Z_v$ are the Pauli-X and Pauli-Z matrices acting on qubit $v$ respectively.
Concretely, $U_v^{(G)}$ has the following action on the graph state $\ket{G}$
\begin{equation}
    U_v^{(G)}\ket{G}=\ket{\tau_v(G)}.
\end{equation}

We call two graphs which are related by some sequence of local complementations \emph{locally equivalent}.
For example star graphs and complete graphs on the same vertex-set are locally equivalent.
As shown by Bouchet in~\cite{Bouchet1991efficient}, deciding if two graphs are locally equivalent can be done in cubic time in the number of vertices of the graphs.
The following theorem, proven by Van den Nest in~\cite{Nest2003LCLC}, captures the relation between single-qubit Cliffords on graph states and local complementations on graphs.

\begin{thm}[Van den Nest~\cite{Nest2003LCLC}]\label{thm:LCLC}
    Two graph states $\ket{G}$ and $\ket{G'}$ single-qubit Clifford equivalent if and only if the two graphs $G$ and $G'$ are locally equivalent.
\end{thm}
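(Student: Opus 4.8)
The plan is to treat the two implications separately: the ``if'' direction follows immediately from the setup recorded above, whereas the ``only if'' direction is the substantive part and is cleanest in the binary symplectic language of the stabilizer formalism. For the ``if'' direction, suppose $G' = \tau_{v_k}\circ\cdots\circ\tau_{v_1}(G)$ and set $G_0=G$, $G_i=\tau_{v_i}(G_{i-1})$, so that $G_k=G'$. Each $U_{v_i}^{(G_{i-1})}$ of \eqref{eq:LCU} is a product of single-qubit unitaries $\exp(\pm\mathrm{i}\tfrac{\pi}{4}P)$ with $P\in\{X,Z\}$, each of which is a single-qubit Clifford, and by construction $U_{v_i}^{(G_{i-1})}\ket{G_{i-1}}=\ket{G_i}$. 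Hence $U_{v_k}^{(G_{k-1})}\cdots U_{v_1}^{(G_0)}\ket{G}=\ket{G'}$; since single-qubit unitaries on distinct qubits commute, this product regroups as a tensor product $\bigotimes_{v\in V}C_v$ of single-qubit Cliffords, so $\ket{G}$ and $\ket{G'}$ are single-qubit Clifford equivalent.

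For the ``only if'' direction I would pass to the binary symplectic representation. Encoding an $n$-qubit Pauli (up to phase) as a vector in $\mathbb{F}_2^{2n}$, the stabilizer group of $\ket{G}$, generated by the $g_v$ of \eqref{eq:graph_stab}, corresponds to the row space of $[\,I\mid\Gamma\,]$ with $\Gamma$ the adjacency matrix of $G$. Because the stabilizer subgroup together with $-I\notin\mathcal{S}$ and the global phase pins down a stabilizer state, and single-qubit Pauli corrections only flip signs of generators, $\ket{G}$ and $\ket{G'}$ are single-qubit Clifford equivalent if and only if there is a \emph{local symplectic} transformation $Q=\bigoplus_{v\in V}Q_v$, $Q_v\in\mathrm{Sp}(2,\mathbb{F}_2)$, mapping the row space of $[\,I\mid\Gamma\,]$ onto that of $[\,I\mid\Gamma'\,]$. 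Writing $Q$ in block form with diagonal $n\times n$ blocks $A,B,C,D$, this is equivalent to $A+\Gamma B$ being invertible together with $\Gamma'=(A+\Gamma B)^{-1}(C+\Gamma D)$. A short computation then shows that the local complementation $\tau_v$ is realized by the particular choice $A=D=I$, $B=e_v e_v^{\mathsf{T}}$, and $C$ the diagonal matrix supported on $N_G(v)$: with these, the displayed formula returns exactly $\Gamma'=\tau_v(\Gamma)$.

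It thus remains to show the converse: every pair $(\Gamma,\Gamma')$ related by the formula $\Gamma'=(A+\Gamma B)^{-1}(C+\Gamma D)$ for \emph{some} local symplectic $Q$ is also related by a sequence of local complementations. This is the crux, and I expect it to be the main obstacle. The difficulty is that decomposing each block $Q_v\in\mathrm{Sp}(2,\mathbb{F}_2)\cong S_3$ into its generating involutions (local Hadamard $h_v$ and local phase $s_v$) does not help directly, since $h_v$ and $s_v$ individually need not send a graph state to a graph state, so the argument must be global. The approach I would follow is Van den Nest's: classify each vertex $v$ by which of the six elements of $\mathrm{Sp}(2,\mathbb{F}_2)$ the block $Q_v$ is, use the invertibility of $A+\Gamma B$ to constrain these classes, and then argue by induction. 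As long as $Q$ does not already act on the row space of $[\,I\mid\Gamma\,]$ as a mere relabeling of local Paulis -- in which case $\Gamma'=\Gamma$ -- one can compose $Q$ with the symplectic transformation of a well-chosen local complementation $\tau_w$ so as to strictly decrease the number of non-trivial blocks while remaining in graph-state form; iterating terminates and shows $\Gamma'$ equals $\Gamma$ up to the local complementations performed, i.e.\ $G$ and $G'$ are locally equivalent. The remaining steps -- the reduction to the $\Gamma'$-formula and the identification of $\tau_v$ as a local symplectic -- are routine linear algebra over $\mathbb{F}_2$.
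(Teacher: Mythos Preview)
The paper does not actually prove \cref{thm:LCLC}; it is stated with attribution to Van den Nest~\cite{Nest2003LCLC} and used as a black box, so there is no in-paper proof to compare against. Your sketch is, by your own description, an outline of Van den Nest's original argument, and the ``if'' direction together with the passage to the binary symplectic picture and the identification of $\tau_v$ with a specific local symplectic are all correct and standard.

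The gap is exactly where you flag it: the inductive step in the ``only if'' direction is asserted but not carried out. You write that ``one can compose $Q$ with the symplectic transformation of a well-chosen local complementation $\tau_w$ so as to strictly decrease the number of non-trivial blocks while remaining in graph-state form,'' but this is precisely the content of the theorem and is not obvious. In Van den Nest's proof the argument requires a case analysis on the type of the block $Q_v$ (there are six elements of $\mathrm{Sp}(2,\mathbb{F}_2)$, partitioned according to whether $B_v=0$ or $B_v=1$), together with the observation that if some $B_v=1$ then invertibility of $A+\Gamma B$ forces the existence of a neighbour $w$ of $v$ on which the right local complementation can be performed; one must then verify that after applying $\tau_w$ the number of vertices with $B_v=1$ strictly drops. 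None of that verification appears in your proposal. As written, the argument is a correct plan rather than a proof: the routine reductions are done, but the one non-routine step is only named.
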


As a direct corollary of \cref{lem:bijective} and \cref{thm:LCLC}, we therefore have the following result.
\begin{cor}\label{cor:equal}
    Let $G$ be a graph with its corresponding graph state $\ket{G}$.
    The number of graph states which are single-qubit Clifford equivalent to $\ket{G}$ is equal to the number of locally equivalent graphs to $G$.
\end{cor}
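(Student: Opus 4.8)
The plan is to exhibit a bijection between the set $\mathcal{L}$ of graphs locally equivalent to $G$ and the set $\mathcal{C}$ of graph states single-qubit Clifford equivalent to $\ket{G}$, the candidate map being simply $G' \mapsto \ket{G'}$. Once this map is shown to be a well-defined bijection from $\mathcal{L}$ onto $\mathcal{C}$, the equality of cardinalities $\abs{\mathcal{L}} = \abs{\mathcal{C}}$ is immediate, which is exactly the statement.

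First I would check surjectivity. By \cref{thm:LCLC}, if $G'$ is locally equivalent to $G$ then $\ket{G'}$ is single-qubit Clifford equivalent to $\ket{G}$, so $G' \mapsto \ket{G'}$ does map $\mathcal{L}$ into $\mathcal{C}$. Conversely, every element of $\mathcal{C}$ is, by definition of a graph state, of the form $\ket{G'}$ for some graph $G'$; since $\ket{G'}$ is single-qubit Clifford equivalent to $\ket{G}$, \cref{thm:LCLC} yields that $G'$ is locally equivalent to $G$, i.e.\ $G' \in \mathcal{L}$, and hence $\ket{G'}$ lies in the image. Next I would invoke \cref{lem:bijective} for injectivity: if $G', G'' \in \mathcal{L}$ with $\ket{G'} = \ket{G''}$, then $G' = G''$. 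Together these give that $G' \mapsto \ket{G'}$ is a bijection from $\mathcal{L}$ onto $\mathcal{C}$, so $\abs{\mathcal{L}} = \abs{\mathcal{C}}$.

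Since the result follows directly from \cref{lem:bijective} and \cref{thm:LCLC}, there is essentially no obstacle here; the only point requiring a little care is that \cref{lem:bijective} be applied with literal equality of graph states rather than equality up to a global phase, which is precisely how that lemma is stated and so matches the injectivity we need. One could additionally remark that both sets are finite, being subsets of the finitely many graphs, respectively graph states, on the vertex set of $G$, but this is not needed for the statement itself.
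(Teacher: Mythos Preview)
Your proposal is correct and matches the paper's approach: the paper simply declares the result a direct corollary of \cref{lem:bijective} and \cref{thm:LCLC}, and your argument is precisely the explicit bijection one would write down from those two ingredients. There is nothing to add or correct.
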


Using \cref{cor:equal} we can now restrict ourselves to the problem of counting locally equivalent graphs.

\section{Isotropic systems}\label{sec:isotropic}
Our main result of this paper makes heavy use of the concept of an \emph{isotropic system}.
In this section we review the definition of an isotropic system and its relation to locally equivalent graphs and graph states.
What is interesting to point out, and perhaps never before noted, is that an isotropic system is in fact equivalent to a stabilizer group, see below.
For this reason, results obtained for isotropic systems can be of great use when studying stabilizer states and graph states.

Isotropic systems were introduced by Bouchet in~\cite{Bouchet1987isotropic}.
The power of isotropic systems is that they exactly capture the equivalence classes of graphs under local complementation or equivalently equivalence classes of graphs states under single-qubit Clifford operations.
Any isotropic system has a set of \emph{fundamental graphs} which are all locally equivalent.
As shown in~\cite{Bouchet1988graphicisotropic}, any graph $G$ is a fundamental graph of some isotropic system $S$.
Furthermore, given a isotropic system $S$ with a fundamental graph $G$, another graph $G'$ is a fundamental graph of $S$ if and only if $G$ and $G'$ are locally equivalent\footnote{That is, if and only if $\ket{G}$ and $\ket{G'}$ are equivalent under single-qubit Clifford operations.}.

In \cref{subsec:isotropic} we review the formal definition of an isotropic system.
In the sections leading up to this, we first set the notation and introduce certain concepts needed.

\subsection{Finite fields and Pauli groups}
Let $\{0, 1, \omega, \omega^2\}$ be the elements of the finite field of four elements $\mathbb{F}_4$.
Under addition we have that $x+x=0$ for any element $x$ of $\mathbb{F}_4$ and furthermore we have that $1+\omega=\omega^2$.
Under multiplication we have that $x^i\cdot x^j = x^{i+j\pmod{3}}$ for any element $x\neq 0$.
An useful inner product on $\mathbb{F}_4$ is the trace inner product, defined as
\begin{equation}\label{eq:trace}
	\langle a, b\rangle = a \cdot b^2 + a^2 \cdot b
\end{equation}
What is interesting in relation to quantum information theory is that addition in $\mathbb{F}_4$ corresponds to matrix multiplication in the Pauli group, up to a global phase.
Furthermore, the trace inner product captures whether two elements of the Pauli group commute or not.
To see this, consider the following mapping $\alpha$ from $\mathbb{F}_4$ to the Pauli group.
\begin{equation}\label{eq:alpha_multiply}
	\alpha(0)=I,\;\alpha(1)=X,\;\alpha(\omega)=Y,\;\alpha(\omega^2)=Z.
\end{equation}
One can then check that
\begin{equation}
	\alpha(a)\alpha(b) = \mathrm{i}^k\alpha(a + b)\quad\text{where}\quad k\in\mathbb{Z}_4.
\end{equation}
Furthermore, we have that
\begin{equation}\label{eq:alpha_commutate}
	[\alpha(a),\alpha(b)] = 0\quad\Leftrightarrow\quad \langle a,b\rangle=0.
\end{equation}
where $[\cdot,\cdot]$ is the commutator.
Similarly we can also define a map from the elements of the vector space $\mathbb{F}_4^n$.
Let $\mathbf{v}$ be a vector of $\mathbb{F}_4^n$ and $\mathbf{v}^i$ be the $i$'th element of $\mathbf{v}$.
We then define a map from $\mathbb{F}_4^n$ to the Pauli group on $n$ qubits as follows.
\begin{equation}\label{eq:alpha}
	\alpha(\mathbf{v})=\bigotimes_{i=0}^n\alpha(\mathbf{v}^i).
\end{equation}
Both \cref{eq:alpha_multiply} and \cref{eq:alpha_commutate} hold for $\alpha$ also acting on vectors of $\mathbb{F}_4^n$.

\subsection{Isotropic systems}\label{subsec:isotropic}
Formally an isotropic system is defined as follows\footnote{The definition here is equivalent to the original one in~\cite{Bouchet1987isotropic}, however we use a slightly different notation than Bouchet used 30 years ago.}.
\begin{mydef}[isotropic system]
	A subspace $S$ of $\mathbb{F}_4^n$ is said to be an isotropic system if: (1) for all $\mathbf{v},\mathbf{w}\in S$ it holds that $\langle \mathbf{v}, \mathbf{w}\rangle=0$ and (2) $S$ has dimension $n$.
\end{mydef}
Now note that $\alpha(S)\equiv\{\alpha(\mathbf{v})\;:\;\mathbf{v}\in S\}$ forms a stabilizer group (ignoring global phases).
This is because condition (1) of the above definition says that all the elements of $\alpha(S)$ commute, by \cref{eq:alpha_commutate}, as required by a stabilizer group.


\subsection{Complete and Eulerian vectors}
Here we review some further concepts related to isotropic systems which we need for the proof of our main result.
Certain isotropic systems can be represented as Eulerian tours on $4$-regular multi-graphs (see \cref{sec:graphic}).
These Eulerian tours correspond to what are called Eulerian vectors of the isotropic system.
The definition of a Eulerian vector also generalizes to all isotropic systems, even those not representable as Eulerian vectors on $4$-regular multi-graphs.
In order to give the definition of an Eulerian vector we must first define what are called complete vectors. 

\begin{mydef}[complete vector]
	A vector $\mathbf{v}$ of $\mathbb{F}_4^n$ such that $\mathbf{v}^i\neq 0$ for all $i\in[n]$ is called \emph{complete}.
\end{mydef}

In the coming sections we will also need to notion of supplementary vectors.

\begin{mydef}[supplementary vectors]
	Two vectors $\mathbf{v},\mathbf{w}$ of $\mathbb{F}_4^n$ are called supplementary if (1) they are complete and (2) $\mathbf{v}^i\neq \mathbf{w}^i$ for all $i\in[n]$.
\end{mydef}
Complete vectors come equipped with a notion of \emph{rank}.
To define the rank of a complete vector we need some further notation.
Let $\mathbf{v}$ be a complete vector of $\mathbb{F}_4^n$.
Let $X$ be a subset of $[n]$ and let $\mathbf{v}[X]$ be a vector such its elements are
\begin{equation}
	(\mathbf{v}[X])^i = \begin{cases} \mathbf{v}^i & \text{if } i\in X \\ 0 & \text{else} \end{cases}
\end{equation}
We can now define the following set
\begin{equation}
	V_\mathbf{v} = \{\mathbf{v}[X]\;:\; X\subseteq [n]\}.
\end{equation}
Note that $V_\mathbf{v}$ forms a subspace of $\mathbb{F}_4^n$.
The rank of $\mathbf{v}$ with respect to $S$ is now defined as the dimension of the intersection of $V_\mathbf{v}$ and $S$.

\begin{mydef}[rank of a complete vector]
	Let $\mathbf{v}$ be a complete vector of $\mathbb{F}_4^n$.
	The rank of $\mathbf{v}$, $r_S(\mathbf{v})$, with respect to $S$ is the dimension of the intersection of $V_\mathbf{v}$ and $S$, i.e. 
	\begin{equation}
		r_S(\mathbf{v}) = \mathrm{dim}(V_\mathbf{v} \cap S)
	\end{equation}
\end{mydef}

We are now ready to formally define an Eulerian vector of an isotropic system.

\begin{mydef}[Eulerian vector]
	A complete vector $\mathbf{v}$ of $\mathbb{F}_4^n$, such that $r_S(\mathbf{v})=0$ is called an Eulerian vector of $S$.
\end{mydef}

\subsection{Fundamental graphs}\label{sec:fundamental}
As mentioned, the power of isotropic systems is that their \emph{fundamental graphs} are exactly the graphs in an equivalence class under local complementations.
Here we review the definition of fundamental graphs of an isotropic system, which is defined by a Eulerian vector through a \emph{graphic description}.
\begin{mydef}[graphic presentation]
	Let $G$ be a graph with vertices\footnote{Note that we can always choose such a labeling of the vertices of $G$.} $V(G)=[n]$ and $\mathbf{v},\mathbf{w}$ be supplementary vectors of $\mathbb{F}_4^n$.
	The following is then an isotropic system
	\begin{equation}\label{eq:graphic_presentation}
		S=\{\mathbf{v}[N_X] + \mathbf{w}[X]\;:\;X\subseteq V(G)\}.
	\end{equation}
	The tuple $(F,\mathbf{v},\mathbf{w})$ is called a graphic presentation of the isotropic system $S$.
	Furthermore, $F$ is called a fundamental graph of $S$.
\end{mydef}

Note that
\begin{equation}\label{eq:iso_basis}
	\{\mathbf{v}[N_v] + \mathbf{w}[\{v\}]\;:\;v\in V(G)\}
\end{equation}
forms a basis for $S$.
In~\cite{Bouchet1988graphicisotropic} it is shown that if $(G,\mathbf{v},\mathbf{w})$ is a graphic presentation of $S$ then $\mathbf{v}$ is a Eulerian vector of $S$.
Furthermore, it is shown that given an Eulerian vector $\mathbf{v}$ of $S$, there exists a unique graphic presentation $(G,\tilde{\mathbf{v}},\mathbf{w})$ of $S$ such that $\mathbf{v}=\tilde{\mathbf{v}}$.
Note that two Eulerian vectors can still represent the same fundamental graph.

The observant reader will notice the close similarity between \cref{eq:iso_basis} and \cref{eq:graph_stab}.
Indeed, consider the two supplementary vectors $\mathbf{v}_{\omega^2} = (\omega^2,\dots,\omega^2)$ and $\mathbf{w}_1=(1,\dots,1)$.
Now, let $G$ be an arbitrary graph and $S$ be the isotropic system with $(G,\mathbf{v}_{\omega^2}, \mathbf{w}_1)$ as a graph presentation, as by \cref{eq:graphic_presentation}.
We will here call $S_G$ the \emph{canonical} isotropic system of $G$.
We then have that $\alpha(S_G)$ is exactly the stabilizer group of the graph state $\ket{G}$.
To see this note that
\begin{equation}
	g_v = \alpha(\mathbf{v}[N_v] + \mathbf{w}[\{v\}])\quad\forall v\in V(G)
\end{equation}
using \cref{eq:alpha} and \cref{eq:graph_stab}.

As mentioned before, any two graphs $G$ and $G'$ are locally equivalent if and only if they are fundamental graphs of the same isotropic system~\cite{Bouchet1988graphicisotropic}.
Furthermore, any graph is a fundamental graph of some isotropic system.
We therefore see that, for any isotropic system $S$, there exists a surjective map from the set of Eulerian vectors of $S$ to the graphs in an equivalence class of graphs under local complementations.
As described in the next section, for certain isotropic systems, the number of Eulerian vectors equals the number of Eulerian tours on some $4$-regular multi-graph.
We will make use of this fact to prove our main result.

\subsection{Graphic systems}\label{sec:graphic}
Certain isotropic systems, called \emph{graphic} systems, can be represented as a $4$-regular multi-graphs.
There is then a surjective map from the Eulerian tours on the $4$-regular multi-graph to the fundamental graphs of the graphic system~\cite{Bouchet1988graphicisotropic}.
The set of fundamental graphs for graphic systems is exactly the set of circle graphs~\cite{Bouchet1988graphicisotropic}.
We will briefly describe this relation here, however leaving out some details which are out of scope for this paper.
For details on graphic systems see~\cite{Bouchet1988graphicisotropic}, for circle graphs see~\cite{Bouchet1994circle,Golumbic2004} and it's relation to graph states see~\cite{dahlberg2018long}.

A $4$-regular multi-graph $F$ is a multi-graph (i.e. can contain multi-edges and self-loops) where each vertex has degree 4, i.e. $\abs{N_F(v)}=4\;\forall v\in V(F)$.
A \emph{walk} $P$ on $F$ is an alternating sequence of vertices and edges
\begin{equation}
	P = v_1e_1v_2\dots e_{k}v_{k+1}
\end{equation}
such that $e_i$ is incident on $v_i$ and $v_{i+1}$ for $i\in[n]$.
A \emph{trail} is a walk with no repeated edges.
A \emph{tour} is a \emph{trail} such that $v_1=v_{k+1}$.
An Eulerian tour is a tour which traverses all edges of $F$.
A $4$-regular multi-graph has at least one Eulerian tour, since all vertices have even degree~\cite{Euler1736}.
Any Eulerian tour on a $4$-regular multi-graph $F$ traverses each vertex exactly twice, except for the vertex which is both the start and the end of the tour.
The order in which these vertices are traversed is captured by the \emph{induced double-occurrence word}.

\begin{mydef}[Induced double-occurrence word]\label{def:eul_tour}
    Let $F$ be a connected $4$-regular multi-graph on $k$ vertices $V(F)$.
    Let $U$ be a Eulerian tour on $F$ of the form
    \begin{equation}\label{eq:eul_tour}
        U=x_1e_1x_2\dots x_{2k-1}e_{2k-1}x_{2k}e_{2k}x_1.
    \end{equation}
    with $x_i\in V(F)$ and $e_i\in E(F)$.
    From a Eulerian tour $U$ as in \cref{eq:eul_tour} we define an induced double-occurrence word as
    \begin{equation}
        m(U)=x_1x_2\dots x_{2k-1}x_{2k}.
    \end{equation}
\end{mydef}

We can now define a mapping from an induced double-occurrence word $m(U)$ to a graph $\mathcal{A}(m(U))$, where the edges of $\mathcal{A}(m(U))$ are exactly the pairs of vertices in $m(U)$ which alternate.
Formally we have the following definition.

\begin{mydef}[Alternance graph]\label{def:alt_graph}
    Let $m(U)$ be the induced double-occurrence word of some Eulerian tour $U$ on some $4$-regular multi-graph $F$.
    Let now $\mathcal{A}(m(U))$ be a graph with vertices $V(F)$ and the edges $E$, such that for all $(u,v)\in V(F)\otimes V(F)$, $(u,v)\in E$ if and only if $m(U)$ is of the form
    \begin{equation}
        \dots u \dots v \dots u \dots v \dots \quad\text{or}\quad \dots v \dots u \dots v \dots u \dots,
    \end{equation}
    i.e. $u$ and $v$ are alternating in $m(U)$.
    We will sometimes also write $\mathcal{A}(U)$ as short for $\mathcal{A}(m(U))$.
\end{mydef}

It turns out that the set of alternating graphs induced by the Eulerian tours on some $4$-regular multi-graph $F$ are exactly the fundamental graphs of some isotropic system $S$.
We then say that $S$ is \emph{associated} to $F$.
An isotropic system that is associated to some $4$-regular multi-graph is called \emph{graphic}.
There is a formal mapping $\lambda$ from a $4$-regular multi-graph $F$ together with an ordering $T$ of its edges to an isotropic system $S=\lambda_T(F)$.
However, this mapping is rather complex and the interested reader can find the details in~\cite{Bouchet1988graphicisotropic}.
What is important here is that, for any $T$, there is a bijective mapping from the Eulerian tours of $F$ to the Eulerian vectors of $S=\lambda_T(F)$~\cite{Bouchet1988graphicisotropic}.
This statement is implied by the results developed in~\cite{Bouchet1988graphicisotropic}, however in a non-trivial way.
For this reason, we here point out why this follows in the following section.

\subsection{Eulerian decompositions}


An Eulerian decomposition $D$ of a $4$-regular multi-graph $F$ is a set of tours on $F$ such that each edges of $F$ is in exactly one of the tours.
As shown in~\cite{Bouchet1988graphicisotropic}, given a $4$-regular multi-graph on $n$ vertices, any Eulerian decomposition can be describe by a complete vector of $\mathbb{F}_4^n$.
To see this, note that an Eulerian decomposition on a $4$-regular multi-graph $F$ can be described by, for each vertex $v$ in $F$, a pairing of the incident edges on $v$.
For example, let $e_v^1$, $e_v^2$, $e_v^3$ and $e_v^4$ be the four edges incident on the vertex $v$ and consider now a pairing where $e_v^1$ is paired with $e_v^2$ and $e_v^3$ with $e_v^4$, written as $((e_v^1, e_v^2), (e_v^3,e_v^4))$.
We can then construct an Eulerian decomposition by walking along the vertices and edges of $F$ and when we reach $v$ through the edge $e_v^1$ we should exit through the edge $e_v^2$ and vice versa.
Note that there are exactly three different ways to pair the four edges of a vertex and we can thus represent this pairing by a non-zero element of $\mathbb{F}_4$ as
\begin{align}
	1 &\mapsto ((e_v^1, e_v^2), (e_v^3,e_v^4)) \\ \omega &\mapsto ((e_v^1, e_v^3), (e_v^2,e_v^4)) \\ \omega^2 &\mapsto ((e_v^1, e_v^4), (e_v^2,e_v^3)).
\end{align}
Furthermore we can represent the pairings of all the vertices of $F$ as a complete vector of $\mathbb{F}_4^n$.
Note that the Eulerian decomposition for a given complete vector depends on the ordering of the edges incident on a vertex.
However this ordering simply changes which Eulerian decomposition is related to which complete vector, but not the fact that we now have a mapping from complete vectors of $\mathbb{F}_4^n$ to Eulerian decompositions of $F$.
This ordering $T$ is exactly the ordering mentioned in the previous section, which can be used to map $F$ to an isotropic system $S=\lambda_T(F)$.
Let now $D_T(\mathbf{v})$ be the Eulerian decomposition induced by the complete vector $\mathbf{v}$ as described above.

Importantly here, as stated in~\cite{Bouchet1988graphicisotropic}, is that, for any Eulerian decomposition $D$ of $F$ there is a unique complete vector $\mathbf{v}\in\mathbb{F}_4^n$ such that $D=D_T(\mathbf{v})$, for a fixed $T$.
Furthermore, the Eulerian decomposition $D_T(\mathbf{v})$ consists of an Eulerian tour if and only if $\mathbf{v}$ is an Eulerian vector of $S=\lambda_T(F)$.
We therefore have the following corollary.
\begin{cor}[Implied by~\cite{Bouchet1988graphicisotropic}]\label{cor:main}
	Let $F$ be a $4$-regular multi-graph with $n$ vertices.
	Let $T$ be an ordering of its vertices as described above and formally defined in~\cite{Bouchet1988graphicisotropic}.
	The number of Eulerian tours on $F$ equals the number of Eulerian vectors of $S=\lambda_T(F)$.
\end{cor}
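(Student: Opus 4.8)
The plan is to read the corollary off the two facts about the map $\mathbf{v}\mapsto D_T(\mathbf{v})$ that are recalled just above it. First I would record, for a fixed edge-ordering $T$, that this map is a \emph{bijection} from the set of complete vectors of $\mathbb{F}_4^n$ onto the set of Eulerian decompositions of $F$: by construction it is a well-defined map (each complete vector picks, at every vertex $v$, one of the three pairings of the four edges incident to $v$, and chasing these pairings produces a decomposition), and the statement attributed to~\cite{Bouchet1988graphicisotropic} that every Eulerian decomposition $D$ equals $D_T(\mathbf{v})$ for a \emph{unique} complete $\mathbf{v}$ is precisely surjectivity together with injectivity. The conceptual point I would emphasize is that an Eulerian decomposition is entirely encoded by its transition system, i.e. by the per-vertex pairing data, which is exactly a nonzero entry of $\mathbb{F}_4$ in each coordinate.

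Next I would pin down the two subsets the corollary equates under this bijection. On the graph side, an Eulerian tour of $F$ is the same object as an Eulerian decomposition consisting of a single tour: a one-part decomposition is a closed trail covering every edge, hence an Eulerian tour, and conversely an Eulerian tour is a one-part decomposition. On the algebra side, the text already recalls from~\cite{Bouchet1988graphicisotropic} that $D_T(\mathbf{v})$ consists of a single tour if and only if $\mathbf{v}$ is an Eulerian vector of $S=\lambda_T(F)$, i.e. $r_S(\mathbf{v})=0$; here I would also note that $\lambda_T(F)$ is a well-defined isotropic system, so that ``Eulerian vector of $S$'' is meaningful.

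The conclusion is then immediate: restricting the bijection $\mathbf{v}\mapsto D_T(\mathbf{v})$ to the preimage of the single-part decompositions yields a bijection between the Eulerian vectors of $S=\lambda_T(F)$ and the Eulerian tours of $F$, so the two sets have equal cardinality. No counting argument is needed beyond exhibiting the bijection.

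I do not expect a genuine mathematical obstacle, since every ingredient is ``implied by~\cite{Bouchet1988graphicisotropic}''; the only real friction is bookkeeping. Specifically, one must fix the convention under which Eulerian tours are enumerated — two tours that differ only in starting vertex or orientation have the same transition system and hence correspond to the same complete vector — so that ``number of Eulerian tours'' matches the transition-system (single-part decomposition) picture that makes the map clean; I would state this convention explicitly at the outset. Everything else is a direct transcription of the cited constructions.
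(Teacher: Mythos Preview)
Your argument is correct and follows exactly the paper's own proof: both use the bijection $\mathbf{v}\mapsto D_T(\mathbf{v})$ between complete vectors and Eulerian decompositions, then restrict to the subset where the decomposition is a single tour (equivalently, where $\mathbf{v}$ is Eulerian) to conclude. Your explicit remark about the counting convention for Eulerian tours is a useful clarification that the paper leaves implicit.
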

\begin{proof}
	From above we already know that a Eulerian decomposition of $F$ is described by exactly one complete vector of $\mathbf{F}_4^n$ through the mapping $D_T$.
	Furthermore, the Eulerian decomposition $D_T(\mathbf{v})$ consists of exactly one Eulerian tour if and only if $\mathbf{v}$ is a Eulerian vector of $S=\lambda_T(F)$.
	Finally the number of Eulerian decompositions of $F$ that consists of exactly one Eulerian tour are clearly equal to the number of Eulerian tours on $F$.
\end{proof}

\subsection{Number of locally equivalent graphs}\label{sec:number_loc}
In~\cite{Bouchet1993} Bouchet showed that $l(G)$, the number of graphs locally equivalent to some graph $G$, is given by
\begin{equation}\label{eq:main_eq0}
    l(G)=\frac{e(S)}{k(S)}
\end{equation}
where $S$ is an isotropic system with $G$ as a fundamental graph and $e(S)$ is the number of Eulerian vectors of $S$ and $k(S)$ is an index of $S$.
We also have that if $S$ and $S'$ are isotropic systems which both have $G$ as a fundamental graph, then $e(S)=e(S')$ and $k(S)=k(S')$.
Using the canonical isotropic system we introduced in \cref{sec:fundamental} we can therefore also define
\begin{equation}\label{eq:eGeS}
	e(G)\equiv e(S_G),\quad k(G)\equiv k(S_G),
\end{equation}
such that
\begin{equation}\label{eq:main_eq}
    l(G)=\frac{e(G)}{k(G)}.
\end{equation}


Below, we review the definition of $k(G)$ as presented in~\cite{Bouchet1993}.
The index $k(G)$ of a graph is given as
\begin{equation}
    k(G)=\begin{cases} \abs{\nu(G)^\bot}+2 & \quad \text{if } G \text{ is in the class } \mu \\ \abs{\nu(G)^\bot} & \quad \text{else} \end{cases}
\end{equation}
where the bineighborhood space $\nu(G)$ and the graph class $\mu$ are defined below and $^\bot$ denotes the orthogonal complement.
Firstly, we introduce the following notation that will help simplify some later expressions.

\begin{mydef}\label{def:binvec}
    Let $S=\{s_1,\dots,s_k\}$ be a set and $P\subseteq S$ a subset of $S$. We will associate to $P$ a binary vector $\overrightarrow{P}$ of length $k$ as follows:
    \begin{equation}
        \overrightarrow{P}^{(i)}=\begin{cases} 1 \quad \text{if } s_i\in P \\ 0 \quad \text{else} \end{cases}
    \end{equation}
    where $\overrightarrow{P}^{(i)}$ is the $i$-th element of $\overrightarrow{P}$.
    We denote the number of nonzero elements of $\overrightarrow{P}$ as $\abs{\overrightarrow{P}}$, such that $\abs{\overrightarrow{P}}=\abs{P}$.
\end{mydef}

The base-set $S$ will here be the vertices $V$ of a graph $G$ and from the context it will always be clear which graph.
We will also use $\cdot$ to denote the element-wise product between two binary vectors, such that
\begin{equation}
	\overrightarrow{P_1\cap P_2} = \overrightarrow{P_1}\cdot\overrightarrow{P_2}.
\end{equation}

To define the graph class $\mu$ we first need to review the notion of a bineighborhood space.

\begin{mydef}[bineighborhood space]
    Let $G=(V,E)$ be a simple graph and $\overline{G}=(V,\overline{E})$ the complementary graph of $G$.
    For any $u,v\in V$ let
    \begin{equation}
        \nu_G(e)=\overrightarrow{N_G(u)\cap N_G(v)}.
    \end{equation}
    For any subset $E'\subseteq E\cup \overline{E}$, let 
    \begin{equation}
        \nu_G(E')=\sum_{e\in E'}\nu_G(e).
    \end{equation}
    We will sometimes write $\nu(e)$ or $\nu(E')$ if it is clear which graph is considered.
    A subset $C\subseteq E$ such that the number of edges in $C$ incident to any vertex in $G$ is even is called a cycle.
    We denote the set of cycles of $G$ as $\mathcal{C}(G)$.
    Let $\mathfrak{V}=\mathbb{Z}^{\abs{V}}_2$ be the binary vector space of dimensions $\abs{V}$ and consider the two subspaces
    \begin{equation}
        \overline{\mathfrak{E}}=\{\nu(E'):E'\subseteq \overline{E}\},\quad \mathfrak{C}=\{\nu(C):C\subseteq \mathcal{C}(G)\}
    \end{equation}
    The bineighborhood space $\nu(G)$ is defined as the sum of the two subspaces $\overline{\mathfrak{E}}$ and $\mathfrak{C}$, i.e.
    \begin{equation}
        \nu(G)=\overline{\mathfrak{E}}+\mathfrak{C}.
    \end{equation}
\end{mydef}

Finally the graph class $\mu$ is defined as follows.

\begin{mydef}[graph class $\mu$]\label{def:mu}
    A simple graph $G=(V,E)$ is said to be in the class $\mu$ if:
    \begin{enumerate}
        \item $d_G(v)=1\pmod{2}$ for every vertex $v\in V$. I.e. all vertices in $G$ should have an odd degree.
	\item $\abs{\nu(e)}=0\pmod{2}$ for all edges $e\in \overline{E}$. I.e. for every edge (u,v), not in $G$, the symmetric difference of the neighborhoods of $u$ and $v$ should have an even size.
	\item $\abs{\nu(C)}=\abs{C}\pmod{2}$ for all cycles $C\in\mathcal{C}(G)$. I.e., for all cycles $C$ of $G$, the number of non-zero elements of the $\nu(C)$ and the number of edges of $C$ should both be even or both be odd.
    \end{enumerate}
\end{mydef}

\section{Complexity}\label{sec:complexity}
The problems in $\mathbb{NP}$ are decision problems where YES-instances to the problem have proofs that can be checked in polynomial time.
For example the SAT-problem is in $\mathbb{NP}$, where one is asked to decide if a given boolean formula has a satisfying assignment of its variables~\cite{Cook1971SAT}.
On the other hand, problems where the NO-instances have proofs that can be checked in polynomial time are the problems in co-$\mathbb{NP}$.
A problem is said the be $\mathbb{NP}$-Complete if (1) it is in $\mathbb{NP}$ and (2) any other problem in $\mathbb{NP}$ can be reduced to this problem in polynomial time.
$\mathbb{NP}$-Complete problems are therefore informally the hardest problem in $\mathbb{NP}$.

\#$\mathbb{P}$ problems are the \emph{counting} versions of the $\mathbb{NP}$ problems.
For example, the \emph{counting} version of SAT (\#SAT) is to compute how many satisfying assignments a given boolean formula has.
\sharpP\ problems are the problems in \#$\mathbb{P}$ for which any other problem in \#$\mathbb{P}$ can be polynomially reduced to.
For example \#SAT is \sharpP~\cite{Valiant1979permanent}.
Note that \sharpP\ is at least as hard as $\mathbb{NP}$-Complete, since if we know the number of satisfying assignments we know if at least one exists.
Other well-known problems \sharpP\ are for example computing the permanent of a given boolean matrix or finding how many perfect matchings a given bipartite graph has~\cite{Valiant1979permanent}.

Recently, \sharpP\ problems have been the interest of the quantum computing community due to the problem of boson sampling~\cite{Aaronson2013boson}.
The boson sampling problem can be solved efficiently on a quantum computer.
Furthermore, the boson sampling problem can be related to the problem of estimating the permanent of a complex matrix.
Since computing the permanent is in general a \sharpP\ problem and thus believed the be infeasible to solve efficiently on a classical computer, the boson sampling problem is therefore is a strong candidate for a problem showing 'quantum supremacy'.

\section{Counting the number of locally equivalent graphs is \#\texorpdfstring{$\mathbb{P}$}{P}-Complete}\label{sec:counting}
Here we show our following main result.

\begin{thm}[main]\label{thm:main}
	Counting the number, $l(G)$,  of locally equivalent graphs to a given graph $G$ is \sharpP.
\end{thm}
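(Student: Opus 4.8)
The plan is to prove two things: that computing $l(G)$ lies in $\#\mathbb{P}$, and that it is $\#\mathbb{P}$-hard. For membership, note that by Bouchet's algorithm~\cite{Bouchet1991efficient} one can decide in polynomial time whether two graphs on a common vertex set are locally equivalent, and that every graph locally equivalent to $G$ has the same vertex set and hence size polynomial in that of $G$; thus $l(G)$ counts the witnesses of a polynomial-time verifiable relation, so $l$ is in $\#\mathbb{P}$.

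For hardness I would reduce from the problem of counting the Eulerian tours of a connected $4$-regular multi-graph, which is known to be $\#\mathbb{P}$-complete. Given such a multi-graph $F$ on $m$ vertices, I would first compute in polynomial time an Eulerian tour $U$ of $F$ (for instance by Hierholzer's algorithm) and form the alternance graph $G=\mathcal{A}(m(U))$ of \cref{def:alt_graph}. Then $G$ is a fundamental graph of a graphic isotropic system $S$ associated to $F$, so \cref{cor:main} gives that $e(S)$ equals the number of Eulerian tours of $F$; and since $e(\cdot)$ takes the same value on every isotropic system having $G$ as a fundamental graph (\cref{sec:number_loc}), we get $e(G)=\#\{\text{Eulerian tours of }F\}$. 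By \cref{eq:main_eq}, $l(G)=e(G)/k(G)$, and therefore $\#\{\text{Eulerian tours of }F\}=l(G)\cdot k(G)$. So a single query to an oracle for $l$, together with the number $k(G)$, solves the source problem; it remains only to argue that $k(G)$ is computable in polynomial time.

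Recall that $k(G)=\abs{\nu(G)^{\bot}}+2$ if $G$ is in the class $\mu$ and $k(G)=\abs{\nu(G)^{\bot}}$ otherwise, and that $\abs{\nu(G)^{\bot}}=2^{\abs{V}-\dim\nu(G)}$ is a number of polynomial bit length. The assignment $E'\mapsto\nu(E')$ is $\mathbb{F}_2$-linear on edge subsets, so $\overline{\mathfrak{E}}$ is spanned by the at most $\binom{\abs{V}}{2}$ vectors $\nu(e)$ with $e\in\overline{E}$, while $\mathfrak{C}$ is spanned by $\{\nu(C):C\text{ ranging over a cycle basis of }G\}$; hence $\dim\nu(G)=\dim(\overline{\mathfrak{E}}+\mathfrak{C})$ is computable by Gaussian elimination. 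Deciding $G\in\mu$ reduces to checking that every vertex has odd degree (condition~1), a parity condition on each of the at most $\binom{\abs{V}}{2}$ non-edges (condition~2), and, for condition~3, the observation that both $C\mapsto\abs{C}\bmod 2$ and $C\mapsto\abs{\nu(C)}\bmod 2$ are $\mathbb{F}_2$-linear functionals on the cycle space $\mathcal{C}(G)$, so that the required equality for \emph{all} cycles can be verified on a cycle basis. Hence $k(G)$ is polynomial-time computable, which completes the reduction and the proof.

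The main obstacle I anticipate is precisely this last step: unwinding the definitions of the bineighborhood space $\nu(G)$ and of the class $\mu$, and recognizing that the a priori exponentially many quantifiers ``over all cycles'' collapse to linear algebra over a cycle basis, so that both the rank of $\nu(G)$ and membership in $\mu$ can be decided efficiently. A minor additional point requiring care is matching the notion of Eulerian tour in \cref{def:eul_tour} to the standard $\#\mathbb{P}$-complete count of Eulerian circuits of a $4$-regular graph; the two differ only by an explicitly computable factor, which does not affect $\#\mathbb{P}$-completeness.
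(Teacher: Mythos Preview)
Your proposal is correct and follows essentially the same route as the paper: reduce counting Eulerian tours of a $4$-regular multi-graph to computing $l(G)$ via $e(G)=l(G)\cdot k(G)$, with the crux being that $k(G)$ is polynomial-time computable because condition~3 of \cref{def:mu} need only be checked on a cycle basis. Your framing of that step (both $C\mapsto\abs{C}\bmod 2$ and $C\mapsto\abs{\nu(C)}\bmod 2$ are $\mathbb{F}_2$-linear on the cycle space) is in fact slightly cleaner than the paper's explicit induction, and you additionally supply the membership-in-$\#\mathbb{P}$ argument, which the paper omits.
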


We do this by showing that counting the number of Eulerian tours of a 4-regular multi-graph can be reduced in polynomial time to computing $l(G)$, where $G$ is a circle graph.
Since counting the number of Eulerian tours of a 4-regular multi-graph is \#$\mathbb{P}$-Complete~\cite{Ge2010}, the result follows.
By \cref{cor:equal} we have the following corollary.
\begin{cor}
	Counting the number of graph states which are single-qubit Clifford equivalent to a given graph state $\ket{G}$ is \sharpP.
\end{cor}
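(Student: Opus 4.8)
The plan is to prove both that the counting problem lies in \#$\mathbb{P}$ and that it is \#$\mathbb{P}$-hard. Membership is the easy direction: a nondeterministic polynomial-time machine can guess a graph $H$ on the vertex set $V(G)$ (one binary choice per potential edge) and accept exactly when $H$ is locally equivalent to $G$, which is decidable in cubic time by Bouchet's algorithm~\cite{Bouchet1991efficient}; since graph states are in bijection with labeled graphs by \cref{lem:bijective}, the number of accepting branches is precisely $l(G)$, so the problem is in \#$\mathbb{P}$. For hardness I would give a polynomial-time (Turing) reduction from the problem of counting Eulerian tours of a connected $4$-regular multi-graph, which is \sharpP~\cite{Ge2010}; a disconnected multi-graph has no Eulerian tours, so restricting to the connected case is no loss.

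Given a connected $4$-regular multi-graph $F$ on $n$ vertices, I would first construct in polynomial time a circle graph $G$ that is a fundamental graph of the graphic isotropic system $S$ associated to $F$: compute any Eulerian tour $U$ of $F$ (it exists since all degrees are even~\cite{Euler1736}, and Hierholzer's algorithm finds one efficiently), form its induced double-occurrence word $m(U)$ as in \cref{def:eul_tour}, and set $G=\mathcal{A}(m(U))$, the alternance graph of \cref{def:alt_graph}. By~\cite{Bouchet1988graphicisotropic} this $G$ is a fundamental graph of $S=\lambda_T(F)$ for the appropriate edge ordering $T$. Since $e$ and $k$ take the same value on every isotropic system sharing a fixed fundamental graph, $e(G)=e(S_G)=e(S)$ and $k(G)=k(S_G)=k(S)$, and \cref{cor:main} identifies $e(S)$ with the number of Eulerian tours of $F$. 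Combining this with Bouchet's identity \cref{eq:main_eq},
\begin{equation}
    (\text{number of Eulerian tours of }F)=e(G)=l(G)\cdot k(G).
\end{equation}

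It therefore remains to show that $k(G)$ can be computed in polynomial time directly from $G$, which is the technical heart of the argument, since Bouchet's definition of the index $k(G)$ quantifies over all cycles of $G$ and over the exponentially large index set $\overline{E}\cup\mathcal{C}(G)$. The point is that everything in sight is $\mathbb{F}_2$-linear. Writing $|\nu(G)^\bot|=2^{|V(G)|-\dim\nu(G)}$, the dimension $\dim\nu(G)$ is obtained by Gaussian elimination over $\mathbb{F}_2$ on the polynomial-size generating set $\{\nu_G(e):e\in\overline{E}\}$ together with $\{\nu_G(C):C\in\mathcal{B}\}$, where $\mathcal{B}$ is a fundamental cycle basis of $G$ with respect to a spanning tree. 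Deciding whether $G$ lies in the class $\mu$ is likewise polynomial: conditions (1) and (2) of \cref{def:mu} are a bounded number of parity checks on vertex degrees and on non-edges, and condition (3) is the assertion that the two $\mathbb{F}_2$-linear functionals $C\mapsto|\nu(C)|\bmod 2$ and $C\mapsto|C|\bmod 2$ agree on the cycle space $\mathcal{C}(G)$, so it suffices to check it on $\mathcal{B}$. Hence $k(G)$ is computable in polynomial time, and from a single oracle call for $l(G)$ we recover $l(G)\cdot k(G)$, i.e.\ the number of Eulerian tours of $F$; together with the membership argument this gives \#$\mathbb{P}$-completeness.

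The main obstacle I expect is exactly this last step: unwinding Bouchet's definition of $k(G)$ and of the class $\mu$ and verifying that the exponential quantifiers collapse to checks over polynomial-size bases, which rests essentially on the linearity observations above. A secondary subtlety to pin down is the appeal to~\cite{Bouchet1988graphicisotropic}: one must confirm that the alternance graph of an \emph{arbitrary} Eulerian tour of $F$ is genuinely a fundamental graph of the associated graphic system, and that $e$ and $k$ are constant on a local-equivalence class, so that replacing $S=\lambda_T(F)$ by the canonical system $S_G$ is harmless.
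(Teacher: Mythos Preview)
Your proposal is correct and follows essentially the same route as the paper. The paper's proof of this corollary is a one-line appeal to \cref{thm:main} and \cref{cor:equal}; what you have written is, in effect, a self-contained reproof of \cref{thm:main} along exactly the lines the paper uses (reduce from counting Eulerian tours on $4$-regular multi-graphs via $e(G)=l(G)\cdot k(G)$, build $G$ as the alternance graph of one Eulerian tour, and argue that $k(G)$ is polynomial-time computable by exploiting $\mathbb{F}_2$-linearity of $\nu$ and of the cycle space). One worthwhile addition on your side: you supply an explicit \#$\mathbb{P}$ membership argument (guess $H$, test local equivalence via Bouchet's cubic algorithm), which the paper does not spell out; minor cosmetic differences are your use of Hierholzer rather than Fleury and your phrasing of condition~(3) of \cref{def:mu} as equality of two $\mathbb{F}_2$-linear functionals on the cycle space, where the paper instead gives an explicit induction on the size of a symmetric difference of basis cycles.
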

\begin{proof}
	Directly implied by \cref{thm:main} and \cref{cor:equal}.
\end{proof}

\subsection{Reducing \# of Eulerian tours to \# of local equivalent graphs}
Here we show how the problem of computing the number of Eulerian tours on a 4-regular multi-graph can be reduced in polynomial time to the problem of computing the number of locally equivalent graphs to some circle graph and thus provide the proof for \cref{thm:main}.
\begin{proof}[Proof of \cref{thm:main}]
	From \cref{cor:main} we know that for any $4$-regular multi-graph $F$, there exists an isotropic system $S=\lambda_T(F)$ such that the number of Eulerian vectors $e(S)$ equals the number of Eulerian tours on $F$.
	Let now $G$ be a fundamental graph of $S$.
	We then have that $e(G)=e(S)$, by \cref{eq:eGeS} and~\cite{Bouchet1988graphicisotropic}.
	Furthermore, recall the $G$ is necessarily also an alternance graph induced by some Eulerian tour on $F$, see \cref{sec:graphic}.
	We can therefore compute the number of Eulerian tours on $F$ by computing $l(G)\cdot k(G)$, as by \cref{eq:main_eq}.
	As we show below, we can both find $G$ and compute $k(G)$ in polynomial time from which the theorem follows.

	We can find $G$ in polynomial time as follows.
	\begin{enumerate}
		\item Find an Eulerian tour $U$ on $F$, can be done in polynomial time by Fleury's algorithm~\cite{Fleury1883}.
		\item Construct the alternance graph $G=\mathcal{A}(U)$ induced by $U$, can be done in polynomial time, see~\cite{dahlberg2018long}.
	\end{enumerate}

    In the rest of this section we show that $k(G)$ can be computed in polynomial time from which the main result follows.
    We will start by showing that determining if a graph $G$ is in the class $\mu$, see \cref{def:mu}, can be done in time $\mathcal{O}(\abs{V}^5)$.
    Note that there might be even faster ways to compute this, but we are here only interested to show that this can be done in polynomial time.
    We assume that the graph $G=(V,E)$ is represented by its adjacency matrix.
    To check if $G$ is in the class $\mu$ one needs to check the three properties in \cref{def:mu}:
    \begin{enumerate}
        \item Checking if all vertices have odd degree can be done in $\mathcal{O}(\abs{V}^2)$ time.
        \item Checking if $\abs{\nu(e)}$ is even for all edges can be done in $\mathcal{O}(\abs{V}^3)$ time, since there are $\mathcal{O}(\abs{V}^2)$ edges and computing $\abs{\nu(e)}$ can be done in linear time\footnote{By taking the inner product of the corresponding rows in the adjacency matrix.}.
        \item For the last property is not directly clear whether this can be done in polynomial time since we need to a priori check the property $\abs{\nu(C)}=\abs{C}\pmod{2}$ for all cycles in $G$, which might be exponentially many.
            As we will now show, we only need to check the property for the cycles in a cycle basis of $G$.
	    A cycle basis $\mathcal{CB}=\{C_1,\dots,C_k\}$, where $k=\mathcal{O}(\abs{V}^2)$, is a set of cycles such that any cycle of $G$ can be written as the symmetric difference of the elements of a subset of $\mathcal{CB}$.
            As shown in~\cite{Paton1969} a cycle basis of an undirected graph can be found in $\mathcal{O}(\abs{V}^2)$ time.
            Thus any cycle of $G$ can be written as
            \begin{equation}
                \bigDelta_{C'\in\mathcal{C}}C' 
            \end{equation}
            where $\mathcal{C}$ is a subset of the cycle basis $\mathcal{CB}$.
            We then have that
            \begin{equation}
                \nu\left(\bigDelta_{C'\in\mathcal{C}}C'\right)=\sum_{C'\in\mathcal{C}}\nu(C').
            \end{equation}
            Thus we need to show that for any $\mathcal{CB}$
            \begin{equation}\label{eq:basis_prop}
                \abs{\sum_{C'\in\mathcal{C}}\nu(C')}=\abs{\bigDelta_{C'\in\mathcal{C}}C'}\pmod{2}\quad\forall \mathcal{C}\subseteq\mathcal{CB}
            \end{equation}
            if and only if
            \begin{equation}\label{eq:basis_prop2}
                \abs{\nu(C)}=\abs{C}\pmod{2}\quad\forall C\in \mathcal{CB}.
            \end{equation}
	    Lets first show that \cref{eq:basis_prop} implies \cref{eq:basis_prop2}.
	    \Cref{eq:basis_prop} states that the equation holds for every subset $\mathcal{C}$ of the elements of the cycle basis $\mathcal{CB}$.
	    In particular it should hold for the singletons $\mathcal{C}=\{C\}$, where $C\in\mathcal{CB}$.
	    Note that this directly implies \cref{eq:basis_prop2}.
	    For the rest of this section we now prove that \cref{eq:basis_prop2} implies \cref{eq:basis_prop}.
            We will do this by induction on the size of $\mathcal{C}$.
            This is obviously true if $\abs{\mathcal{C}}=1$.
            Lets therefore assume that the statement is true for $\abs{\mathcal{C}}\leq k$ which we will show implies that it is also true for $\abs{\mathcal{C}}=k+1$.
            Lets assume that $\mathcal{C}$ is a subset of $\mathcal{CB}$ of size $k+1$ and that $\tilde{C}$ is an element of $\mathcal{C}$.
            Lets then consider the left-hand side of \cref{eq:basis_prop}
            \begin{equation}\label{eq:deriv1}
                \abs{\sum_{C'\in\mathcal{C}}\nu(C')}=\abs{\sum_{C'\in\mathcal{C}\setminus\{\tilde{C}\}}\nu(C')+\nu(\tilde{C})}.
            \end{equation}
            We will now make use of the fact that the size of the symmetric difference of two sets $S_1$ and $S_2$ is $\abs{S_1\Delta S_2}=\abs{S_1}+\abs{S_2}-2\abs{S_1\cap S_2}$.
	    Expressed in terms of binary vectors this relation reads $\abs{\overrightarrow{S_1} + \overrightarrow{S_2}} = \abs{\overrightarrow{S_1}} + \abs{\overrightarrow{S_2}} - 2\abs{\overrightarrow{S_1}\cdot\overrightarrow{S_2}}$.
	    We therefore have that \cref{eq:deriv1} evaluates to
            \begin{widetext}
            \begin{equation}
               \abs{\sum_{C'\in\mathcal{C}\setminus\{\tilde{C}\}}\nu(C')+\nu(\tilde{C})}=\abs{\sum_{C'\in\mathcal{C}\setminus\{\tilde{C}\}}\nu(C')}+\abs{\nu(\tilde{C})}-2\abs{\left(\sum_{C'\in\mathcal{C}\setminus\{\tilde{C}\}}\nu(C')\right)\cdot\nu(\tilde{C})}
            \end{equation}
            \end{widetext}
            The result then follows since when taking $\pmod{2}$, the last term in the above expression vanishes and the two first evaluate to
            \begin{equation}\label{eq:deriv2}
                \abs{\bigDelta_{C'\in\mathcal{C}\setminus\{\tilde{C}\}}C'}+\abs{\tilde{C}}
            \end{equation}
            where we used the induction hypothesis.
            By a similar argument one can see that the expression in \cref{eq:deriv2} equals $\pmod{2}$
            \begin{equation}
                \abs{\bigDelta_{C'\in\mathcal{C}}C'}.
            \end{equation}

            Thus the total time to check property 3 in \cref{def:mu} is $\mathcal{O}(\abs{V}^5)$.
            To see this, note that we need to check $\abs{\nu(C)}=\abs{C}\pmod{2}$ for all $C\in\mathcal{CB}$, which contains $\mathcal{O}(V^2)$ elements.
            To compute $\abs{\nu(C)}$, we compute $\nu(e)$, in linear time, for each of the $\mathcal{O}(V^2)$ elements of $C$, and add these together, also in linear time.
    \end{enumerate}

    Additionally to deciding if the graph $G$ is in the class $\mu$, we also need to compute $\abs{\nu{(G)}^\bot}$ to determine $k(G)$.
    This can be done by first finding bases for the subspaces $\overline{\mathfrak{E}}$ and $\mathfrak{C}$.
    For $\overline{\mathfrak{E}}$ a basis can be found as $\left\{\overrightarrow{\{e\}}:e\in \overline{E}\right\}$.
    As stated above we can also find a basis for $\mathfrak{C}$, i.e. the cycle basis, in $\mathcal{O}(\abs{V}^2)$ time.
    From the bases for $\overline{\mathfrak{E}}$ and $\mathfrak{C}$ we can find a basis for $\nu(G)$ in $\mathcal{O}(\abs{V}^3)$ time, by Gaussian elimination.
    The number of basis vectors we found for $\nu(G)$ is then the dimension of $\nu(G)$.
    From the dimension of $\nu(G)$ we can find the dimension of $\nu(G)^\bot$ as
    \begin{equation}
        \mathrm{dim}(\nu(G)^\top)=\abs{V}-\mathrm{dim}(\nu(G))
    \end{equation}
    and finally the size of $\nu(G)^\bot$ as
    \begin{equation}
        \abs{\nu(G)^\bot}=2^{\mathrm{dim}(\nu(G)^\bot)}.
    \end{equation}

    Thus there exist an algorithm to compute $k(G)$ with running time $\mathcal{O}(\abs{V}^5)$.
    This then implies that computing the number of Eulerian tours in a 4-regular multi-graph can be reduced in polynomial time to computing the number of locally equivalent graphs to some circle graph, by using \cref{eq:main_eq}, and therefore \cref{thm:main}.
\end{proof}

\section{Conclusion}
We have shown that counting the number of graph states equivalent under single-qubit Clifford operations is \sharpP.
To do this we have made heavy use of certain concepts in graph theory, mainly developed by Bouchet.
As it turns out these concepts, for example isotropic systems, are highly relevant for the study of stabilizer and graph states.
We hope that this paper can serve as not only a proof of our main theorem but also as a reference for those in quantum information theory interested in finding use for these graph theory concepts in their research.

\begin{acknowledgments}
AD, JH and SW were supported by an ERC Starting grant, an NWO VIDI grant, and the Zwaartekracht QSC.
\end{acknowledgments}

\bibliography{refs}
\end{document}